\documentclass[runningheads]{llncs}

\usepackage[T1]{fontenc}
\usepackage{anyfontsize}
\usepackage{stmaryrd}
\SetSymbolFont{stmry}{bold}{U}{stmry}{m}{n} 
\usepackage
[disable]
{todonotes}
\usepackage{xspace}
\usepackage{physics} 
\usepackage{algpseudocode}
\usepackage{amssymb}
\usepackage{tikz} 
\usetikzlibrary{quantikz2}
\usetikzlibrary{shapes}
\usepackage{adjustbox}
\usepackage{graphicx} 
\usepackage[hypcap=false]{caption} 
\usepackage{bussproofs} 
\usepackage{subcaption}
\usepackage{listings}
\usepackage{xcolor} 
\usepackage{hyperref}
\hypersetup{
    breaklinks=true
}
\usepackage{booktabs}
\usepackage{framed} 
\usepackage{circuitikz}
\usepackage{array}
\usepackage{colortbl}
\definecolor{lightgreen}{RGB}{144, 238, 144}
\definecolor{lightred}{RGB}{255, 182, 193}
\usepackage{multirow}
\usepackage[normalem]{ulem}
\usepackage{etoolbox}
\apptocmd{\sloppy}{\hbadness 10000\relax}{}{}
\usepackage{bm} 
\newcommand{\myparagraph}[1]{\medskip \noindent \textbf{\sffamily #1.}}
\newcommand{\myaltparagraph}[1]{\medskip \noindent \textbf{\sffamily #1}}

\usepackage{wrapfig}
\usepackage{etoolbox}
\usepackage{pifont}

\newcommand{\xmark}[0]{\ensuremath{\times}\xspace}

\newcommand{\hcirc}[0]{Hybrid Circuit\xspace}

\newcommand{\ucirc}[0]{{Unitary Circuit}\xspace}
\newcommand{\qcirc}[0]{{Quantum Circuit}\xspace}

\newcommand{\dc}[0]{{\ensuremath{\mathsf{DC}}\xspace}}

\newcommand{\unit}[0]{\ensuremath{\mathsf{U}}\xspace}

\newcommand{\ium}{\ensuremath{\mathsf{IUM}}\xspace}
\newcommand{\iumo}{\ensuremath{\mathsf{I}_1\mathsf{U}_1\mathsf{M}_1}\xspace}
\newcommand{\iumt}{\ensuremath{\mathsf{I}_2\mathsf{U}_2\mathsf{M}_2}\xspace}

\newcommand{\meas}[0]{\ensuremath{\mathsf{Meas}}\xspace}
\newcommand{\init}[0]{\ensuremath{\mathsf{Init}}\xspace}

\newcommand{\inp}[0]{\ensuremath{\mathsf{InpVar}}\xspace}
\newcommand{\out}[0]{\ensuremath{\mathsf{Obs}}\xspace}

\newcommand{\dm}[0]{\ensuremath{\mathsf{{DM}}\xspace}}

\newcommand{\had}[0]{\ensuremath{\mathsf{{H}}}\xspace}
\newcommand{\xx}[0]{\ensuremath{\mathsf{{X}}\xspace}}
\newcommand{\zz}[1]{\ensuremath{\mathsf{{R_z(#1)}}\xspace}}
\newcommand{\uu}[0]{\ensuremath{\mathsf{{U1}}\xspace}}
\newcommand{\crz}[1]{\ensuremath{\mathsf{{CR_z(#1)}}\xspace}}

\newcommand{\cswap}[0]{\ensuremath{\mathsf{{CSWAP}}\xspace}}

\newcommand{\owm}[0]{{OWM\xspace}}
\newcommand{\tele}[0]{{Tele\xspace}}
\newcommand{\qiskit}[0]{{Qiskit\xspace}}

\newcommand{\qiskittr}[0]{\ensuremath{{{Qiskit_{tr}}}\xspace}}

\newcommand{\qft}[0]{\ensuremath{\mathsf{QFT}}\xspace}
\newcommand{\qpe}[0]{\ensuremath{\mathsf{QPE}}\xspace}
\newcommand{\shor}[0]{\ensuremath{\mathsf{Shor}}\xspace}

\newcommand{\udyn}[0]{\unit-\dc\xspace}
\newcommand{\hcqis}[0]{\dc-Q(\dc)\xspace}

\newcommand{\owmu}[0]{O(\unit)-\unit}

\newcommand{\teleu}[0]{T(\unit)-\unit}
\newcommand{\owmhc}[0]{O(\dc)-Q(\dc)\xspace}
\newcommand{\owmtele}[0]{O(\unit)-T(\unit)\xspace}

\newcommand{\openqasm}[0]{OpenQASM\xspace}

\newcommand{\pyzx}[0]{PyZX\xspace}

\newcommand{\qcec}[0]{QCEC\xspace}
\newcommand{\sliqec}[0]{SliQEC\xspace}

\newcommand{\ddiag}[0]{DD\xspace}

\newcommand{\autoq}[0]{AutoQ\xspace}
\newcommand{\autoqTwo}[0]{AutoQ-2\xspace}

\newcommand{\feynman}[0]{Feynman\xspace}

\newcommand{\sqbricks}[0]{SQbricks\xspace}
\newcommand{\sqbricksverif}[0]{SQbricks-Verif\xspace}
\newcommand{\sqv}[0]{SQV\xspace}

\newcommand{\sqbrickslift}[0]{SQbricks-Lifting\xspace}
\newcommand{\sql}[0]{SQL\xspace}

\newcommand{\veriqbench}[0]{VeriQBench\xspace}
\newcommand{\veriqc}[0]{VeriQC\xspace}
\newcommand{\qasmbench}[0]{QASMBench\xspace}
\newcommand{\cone}{\textsf{DisFree}\xspace}
\newcommand{\conetwo}{\textsf{Mix}\xspace}
\newcommand{\ctwo}{\textsf{Dis}\xspace}
\newcommand{\NA}{NA}
\newcommand{\NS}{NW}

\newcommand{\stand}[0]{standalone\xspace}
\newcommand{\lift}[0]{lifting\xspace}

\newcommand{\initqb}{{\bm{\shortmid}}\xspace}

\newcommand{\hequiv}{\ensuremath{\equiv_{\texttt{H}}}\xspace}

\newcommand{\G}{\ensuremath{\textsf{G}\xspace}}
\newcommand{\Ga}{\ensuremath{\textsf{GA}\xspace}}

\newcommand{\Ins}{\ensuremath{\textsf{Ins}\xspace}}
\newcommand{\Gp}{\ensuremath{\textsf{Ph}\xspace}}

\newcommand{\X}{\ensuremath{\textsf{X}\xspace}}
\newcommand{\Had}{\ensuremath{\textsf{H}\xspace}}
\newcommand{\wnot}{\ensuremath{\textsf{Not}\xspace}}
\newcommand{\qbit}{\ensuremath{\textsf{qb}\xspace}}
\newcommand{\cbit}{\ensuremath{\textsf{cb}\xspace}}
\newcommand{\Cir}{\ensuremath{\textsf{C}\xspace}}
\newcommand{\apply}{\ensuremath{\textsf{Apply}\xspace}}
\newcommand{\nil}{\ensuremath{\textsf{nill}\xspace}}

\newcommand{\linit}{\ensuremath{\texttt{lInit}\xspace}}
\newcommand{\emeas}{\ensuremath{\texttt{eMeas}\xspace}}

\newcommand{\headf}[2]{\ensuremath{#1-#2\xspace}}
\newcommand{\tailf}[2]{\ensuremath{#1-#2\xspace}}
\newcommand{\ifthen}[1]{\ensuremath{\textsf{if}\ #1 \ \textsf{then}\ }}

\DeclareRobustCommand{\ket}[1]{| #1 \rangle}

\begin{document}
\title{Quantum Circuit Equivalence Checking:\texorpdfstring{\\}{} A Tractable Bridge From Unitary to Hybrid Circuits}
\titlerunning{{Quantum Circuit Equivalence Checking}}
\author{Jérome Ricciardi\inst{1,2}\orcidID{0009-0001-7433-8384} \and
Sébastien Bardin\inst{1}\orcidID{0000-0002-6509-3506} \and
Christophe Chareton\inst{1}\orcidID{0000-0001-7113-563X} \and 
Benoît Valiron\inst{2}\orcidID{0000-0002-1008-5605}
}
\authorrunning{J. Ricciardi, et al.}

\institute{
Université Paris-Saclay, CEA, List, F-91120, Palaiseau, France, 
\email{first.name@cea.fr}\and
Université Paris-Saclay, CNRS, CentraleSupélec, ENS Paris-Saclay, Inria, Laboratoire Méthodes Formelles, 91190, Gif-sur-Yvette, France, \email{benoit.valiron@lmf.cnrs.fr}
}
\maketitle 
\begin{abstract}
  Equivalence checking of hybrid quantum circuits is of primary importance, given that quantum circuit transformations are omnipresent along the quantum compiler chain. 
  While some approaches exist for automating this
  task, most focus on the simple case of unitary circuits. At the same
  time, real quantum computing requires hybrid circuits equipped with
  measurement operators. Moreover, the few approaches targeting the
  hybrid case are limited to a restricted class of problems.  We propose
  tackling the Quantum Hybrid Circuit Equivalence Checking problem
  through lifting unitary circuit verification using a transformation
  known as deferred measurement. We show that this approach alone
  significantly outperforms prior work, and that, with the addition of
  specific unitary-level techniques we call separation and projection,
  it can handle much larger classes of hybrid circuit equivalence
  problems. We have implemented and evaluated our method over standard
  circuit transformations such as teleportation, one-way measurement,
  or the IBM Qiskit compiler, demonstrating its promises. As a side
  finding, we have identified and reported several unexpected behaviours with the Qiskit
  compiler.
    \keywords{{Hybrid classical/quantum circuits, Automated verification, Formal certification.}}
\end{abstract}

\section{Introduction}

Quantum computation is a realm where information is stored on the state of objects governed by the laws of quantum physics \cite{nielsenChuang2002}. This model of computation is believed to provide important speedup for many applications, ranging from high-performance computing to optimisation. In recent years, quantum computers have become a near-term physical, industrial, and economic reality.

Compared to classical information, quantum data is very peculiar: it cannot be duplicated, and reading is a probabilistic operation done through \emph{measurement}, changing the global state of the memory. Unlike classical data, whose typical semantics is based on discrete data structures, quantum information is modelled with vectorial structures in Hilbert spaces. Because they manipulate data structures radically different from the classical case, quantum computers require specific developments at every development stage (user languages, verification, optimisation, compilation, etc.)~\cite{marco2023qverif,chareton2021formal}.

The typical execution flow for quantum programs relies on the notion of \emph{quantum coprocessor}: the quantum memory is stored in an external device, seen as a coprocessor to a CPU, similar to what happens for a GPU, for instance. The quantum coprocessor keeps the memory alive, while the main computation occurs on the classical CPU. A quantum process, therefore, consists of a (classical) interaction with the coprocessor by sending a series of instructions to initialise, update (with \emph{quantum, unitary gates}), and measure the quantum memory to retrieve classical information. A measurement's result is a classical piece of information that might be stored for later use or discarded. Such a series of instructions is represented by a \emph{quantum circuit}. A circuit can generally mix qubit initialisations, unitaries, measurements, and discards. We call \emph{unitary} (or \emph{purely quantum}) a circuit that consists only of quantum gates. A circuit with unitary gates, initialisation, measurements, {discards, and classically controlled instruction} is called a \emph{hybrid circuit}.

Known as the \emph{deferred measurement principle}, a result from folklore states that measurements and discards can be postponed to the end of the computation. This principle is partly why circuits found in most quantum algorithms are given uniquely in terms of unitary gates: measurement can always be thought of happening at the end of the computation. However, from an implementation point of view, it makes sense to consider the hybrid case as many quantum processes, such as repeat-until-success, measurement-based quantum computation, or optimisation techniques, rely on it.

Quantum circuit transformations turn a circuit into another (equivalent) quantum circuit. Such techniques are key components of the quantum software stack, whether for optimisation, adaptation to hardware capabilities in terms of qubits and connectivity, error correction, circuit robustification, one-way measurement. Compilers like $\qiskit$ do propose several transformation passes. As these passes are omnipresent between the programmer and the quantum hardware, checking their correctness is paramount. 
While we could envision quantum compilers fully certified in interactive theorem provers, another approach involves designing dedicated automated circuit equivalence checkers~\cite{marco2023qverif}.

This paper focuses on the automated equivalence verification problem arising from quantum circuit transformation: how to verify that two quantum circuits are functionally equivalent. In particular, we focus on verifying circuits involving initialisation, measurements, discards, and classically controlled instructions: hybrid circuits.

\myparagraph{Challenge with hybrid circuits equivalence verification} 
Current methodologies for automatic equivalence checking predominantly focus on purely quantum, unitary circuits~\cite{chen2025AutoQ20Unitary,amy2018towards,kissinger2020Pyzx,sander2024equivalencecheckingquantumcircuits}. This restriction is becoming increasingly unrealistic as physical chips progress. Indeed, the current trend in the design of quantum compilation toolchains tends to split the purely quantum processes into smaller components, to make them more robust to noise. Moreover, even pure quantum primitives require hybridisation and classical control for error detection and correction.

\medskip
Compared to purely quantum circuits, the formalisation of hybrid computations brings several extra difficulties.

A first difficulty is that hybrid quantum computation is, by nature, non-deterministic: the functional behaviour of a quantum program is a branching probabilistic structure instead of a linear trace. Quantum computation can then be regarded as a strict superset of probabilistic computation---a field where the formal analysis of programs is still a research question~\cite{Barthe_Katoen_Silva_2020}.
    
A second difficulty is that a quantum program manipulates both quantum and classical registers. Some of these registers might contain ``garbage'', i.e. data that should not be considered as output: a relevant equivalence notion should not consider them. We talk of \emph{partial equivalence} when the state equivalence is evaluated only up to non-discarded registers, an essential feature to model a program's state and its observable behaviour.
However, discarding quantum information is not innocuous, as it corresponds to tracing it out: discarding corresponds to a measurement and is then a probabilistic process.
 Therefore, the question of \emph{partial equivalence} of quantum programs is not trivial to define.
    
Finally, the measurement operation induces a deep conceptual shift from a (quantum) deterministic process purely modelled in linear algebra to a stochastic process acting on a mix of quantum states and usual data structures (Boolean values, integers, etc).
\begin{center}
  \textit{Designing a unified, tractable mechanism for the equivalence of hybrid quantum circuits is therefore a major challenge for formal methods.}
\end{center}
\myparagraph{Contributions}
To address this challenge, this paper presents an approach based on two main ingredients: (1) a classification of hybrid circuit equivalence problem instances, and (2) the lifting of verification methods designed for the unitary case to verify hybrid circuits, based on the deferred measurement principle mentioned earlier. In doing so, we advocate for the versatility of the formalism of \emph{path-sums} in the context of hybrid quantum circuit equivalence checking. More precisely, we bring the following contributions:
\begin{enumerate}
    \item
      We clarify the current landscape of hybrid circuit equivalence checking (Section~\ref{sec:overview}). In particular, we draw the separation between circuits with and without discards and the induced partitioning of equivalence cases.
    \item
      We propose a generic method based on deferred measurement that extends unitary circuit equivalence checkers to support \emph{hybrid circuits} (Section~\ref{se:partial-equivalences-of-circuits}). 
    \item
      We introduce a verification tool, \sqbricks, dedicated to hybrid circuit equivalence checking (Section~\ref{sec:implem}), included when discard is at stake. We systematically evaluate this implementation. 
\end{enumerate}
Overall, while our technique already allows for significant clarification and pushes forward the state-of-the-art of hybrid circuit equivalence checking, we also believe that our findings and benchmark establish a solid baseline for future research in the field.
To this end, \textit{\sqbricks and our experimental setup will be made available as open source.}

\myparagraph{State of the Art}\label{se:soa} 
Equivalence checking of purely unitary circuits is prevalent among the state-of-the-art. One can cite \autoq~\cite{chen2023AutoQ10Unitary,chen2024autoq20hybrid,chen2025AutoQ20Unitary}, \feynman~\cite{amy2018towards,amy2023complete,amy2025POPL}, \sliqec \cite{wei2022sliqec,wei2022sliqecPartialequiv}, and \pyzx~\cite{kissinger2020Pyzx}.  These tools leverage distinct methodologies, respectively: Tree Automata, Path-Sum, Decision Diagrams, and ZX-calculus.

On the other hand, to the best of our knowledge, only \qcec~\cite{burgholzerHandlingNonUnitariesQuantum2022} and \veriqc~\cite{xin2022DynamicEquivCheckTDD} address the equivalence checking of hybrid quantum circuits.  However, concrete implementations are limited to the case without discard (characterized as the \cone case in Section 4.5 of~\cite{xin2022DynamicEquivCheckTDD}) .

  \qcec relies on a combination of ZX-calculus, Decision Diagrams, Simulation, and deferred measurement for hybrid circuit equivalence.  Yet, the approach is limited \cite[Section.~4, p.~4]{burgholzerHandlingNonUnitariesQuantum2022} to the narrow case of hybrid circuits with no discards, preventing the analysis of typical transformations such as teleportation, one-way measurement, error correction, optimisations with ancillas, etc., and any form of non-reversible computation.

  \veriqc~\cite{xin2022DynamicEquivCheckTDD} implements measurement and classical control in quantum circuits using Tensor Decision Diagrams.  The authors have already identified one of the main challenges in hybrid circuit equivalence verification: managing discards, and they have formally defined an equivalence relation that accounts for it. Still, their tool does not tackle it and does not adress discard.
  
Table~\ref{tab:new-soa} provides an overview of the current state-of-the-art and our results.  It details the tools in terms of the technology they employ, the types of equivalence verification they support, and whether they rely on projection and separation techniques (last column).
\begin{table}[tb]
    \centering
    \footnotesize
    \caption{Automatic \qcirc Equivalence Tools. \sqv: \sqbricksverif,
  \sql: \sqbrickslift, (*): Any unitary equivalence checker considered
  here, **: The approach is unclear on these points (\veriqc), \(\dagger\): Meet our prerequisites
  (featuring separation tests and projections) to be able to verify
  lifted problems from \conetwo and \ctwo, \checkmark: Valid, \xmark:
  Invalid.}
    \label{tab:new-soa}
\begin{tabular}{ll|c|c|c@{\hspace{2mm}}c|c}
    \toprule
    &               & \multicolumn{4}{ c|}{\textbf{Equivalence}}   &{}\\
    &               & \multicolumn{4}{ c|}{\textbf{Checking Abilities}}   & Prerequisites$\dagger$\\
    {Name}          & {Techno.}     & Unitary       & \multicolumn{3}{ c|}{Hybrid} &{}\\
    &               &               & \cone & \conetwo  & \ctwo &   \\
    \autoq~\cite{chen2023AutoQ10Unitary,chen2025AutoQ20Unitary}          
    & Tree Automata & \checkmark    &    \xmark   & \xmark & \xmark& \xmark  \\
    \autoqTwo~\cite{chen2024autoq20hybrid}       
    & Tree Automata & \checkmark &\xmark&\xmark&\xmark&\xmark\\
    \feynman~\cite{amy2018towards,amy2023complete,amy2025POPL}        
    & Path-Sum & \checkmark &\xmark&\xmark&\xmark&\xmark\\
    \pyzx~\cite{kissinger2020Pyzx}           
    & ZX Calculus & \checkmark &\xmark&\xmark&\xmark&\xmark\\
    \sliqec~\cite{wei2022sliqec,wei2022sliqecPartialequiv}         
    & Decision Diagram (\ddiag)& \checkmark &\xmark&\xmark&\xmark&\xmark\\
    \midrule
    \qcec~\cite{sander2024equivalencecheckingquantumcircuits,Peham_2022,burgholzerHandlingNonUnitariesQuantum2022}           
    & ZX, DD, Simulation & \checkmark & \checkmark &\xmark&\xmark&\xmark\\
    \veriqc~\cite{xin2022EquivCheckTDD,xin2022DynamicEquivCheckTDD}         
    & Tensor DD & \checkmark & \checkmark & ** & ** & \xmark\\
    \midrule
    \midrule
    \sqv & Path-Sum & \checkmark &\xmark&\xmark&\xmark& \checkmark  \\
    \sqv + \sql  & Path-Sum & \checkmark & \checkmark & \checkmark & \checkmark & \checkmark  \\
    (*) + \sql & & \checkmark & \checkmark & \xmark & \xmark & \xmark \\
    \bottomrule
\end{tabular}
\end{table}

\section{Background}\label{sec:background}

This section is devoted to the presentation of the model of quantum computation, in particular, its main programming model: quantum circuits. We then introduce the two main aspects of the state-of-the-art that underpin our proposal: the \emph{deferred measurement principle} and the \emph{path-sums} semantics.

\myparagraph{Quantum Circuits}
The interaction with the quantum coprocessor consists of emitting a sequence of initialisation, unitary gates, measurements, and discards, potentially classically controlled by the result of previous measurements. 

\begin{example}\label{runexam}\rm 
  As an example of a hybrid circuit, Figure~\ref{fig:telep-intro} illustrates the \emph{teleportation protocol}~\cite{bennett1993teleporting}.  This protocol operates over mixed quantum and classical data and describes the transmission of a quantum state from Alice (input register \(\inp\)) to Bob (observable register \(\out\)).  The protocol highlights the general structure of quantum circuits: initialisation (\(\init\)), unitary evolution, measurement, discards, and classically controlled operations.
  The circuit consists of three quantum wires (single lines) and two classical wires (double lines).  It employs three types of gates: the Hadamard gate (\(\had\)), the (controlled) NOT gate \((\xx)\) and the phase gate \((\zz{1})\).
  Measurement \scalebox{.7}{\(\begin{quantikz} \meter{} \end{quantikz}\)} is a non-deterministic, described by the so-called  \emph{Born's rule}, with results stored in classical wires \scalebox{.7}{\(\begin{quantikz} \targ{} \end{quantikz}\)}. Classical wires are assumed to be initialised at $0$. Here, their values are not outputs of the circuit: they are discarded with \begin{quantikz}[wire types={c}]&\ground{}\end{quantikz}. 
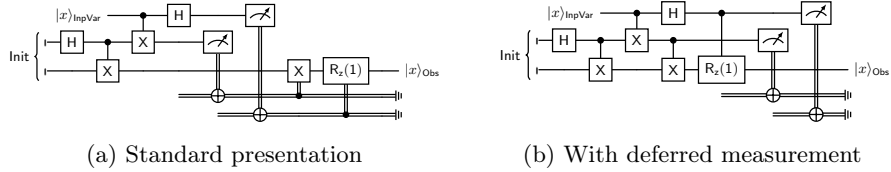
\begin{figure}[t] 
    \centering 
    \begin{subfigure}[b]{0.49\linewidth} 
    \centering 
    \adjustbox{width=\textwidth}{
    \begin{quantikz}[row sep=0.1cm,column sep=0.3cm,wire types={n,q,q,n,n}]
    &&\lstick{$\ket{x}_{\inp}$}&\ctrl{1}\setwiretype{q}&\gate{\had}&& \meter{} \wire[d][4]{c}\\ 
    \lstick[2]{$\init$}\initqb&\gate{\had}&\ctrl{1}&\gate{\xx}&&\meter{} \wire[d][2]{c}\\ 
    \initqb&&\gate{\xx}&&&&&\gate{\xx} \wire[d][1]{c} &\gate{\zz{1}} \wire[d][2]{c} & &\rstick{$\ket{x}_{\out}$}\\ 
    &&&&&\targ{}\setwiretype{c}&&\ctrl{0}&&&\ground{}\\ 
    &&&&&&\targ{}\setwiretype{c}&&\ctrl{0}&&\ground{}
    \end{quantikz} }
    \caption{Standard presentation}
    \label{fig:telep-intro}
    \end{subfigure} 
    \hfill 
    \begin{subfigure}[b]{0.49\linewidth} 
    \centering 
    \adjustbox{width=0.9\textwidth}{
    \begin{quantikz}[row sep=0.1cm,column sep=0.3cm,wire types={n,q,q,n,n}]
    &&\lstick{$\ket{x}_{\inp}$}&\ctrl{1}\setwiretype{q}&\gate{\had}&\ctrl{2}&& \meter{} \wire[d][4]{c}\\ 
    \lstick[2]{$\init$}\initqb&\gate{\had}&\ctrl{1}&\gate{\xx}&\ctrl{1}&& \meter{}\wire[d][2]{c}\\     \initqb&&\gate{\xx}&&\gate{\xx}&\gate{\zz{1}}&&&\rstick{$\ket{x}_{\out}$}\\
    &&&&&&\targ{}\setwiretype{c}&&\ground{}\\
    &&&&&&&\targ{}\setwiretype{c}&\ground{}
    \end{quantikz}}
    \caption{With deferred measurement}
    \label{fig:telep-dm}
    \end{subfigure} 
    \caption{The one-qubit teleportation algorithm}
    \label{fig:motiv-main}
\end{figure}
Historically, mathematical semantics for quantum circuits have relied on the underlying mathematical representation of unitary gates: linear, unitary maps. 
One problem is that the matrix size corresponds to the number of basis elements in the corresponding vector space: it is exponentially costly to compute. Another issue is that quantum circuits generally contain measurements, typically handled with more involved representations such as density matrices and superoperators. By sake of space, we cannot further introduce these quantum circuit component and their interpretation. We refer the desirous reader
to~\cite{nielsenChuang2002}~(Ch.4) for the standard introductory material.

\end{example}
\myaltparagraph{The deferred measurement}  is a circuit transformation that postpones all measurements to the end by replacing classical controls with quantum controls, leaving the inner circuit purely unitary.  
This transformation preserves the semantics, resulting in equivalent circuits.
Figure~\ref{fig:telep-dm} shows the result of applying the process to the teleportation algorithm of Figure~\ref{fig:telep-intro}. The output circuit consists of a round of initialisations \begin{quantikz}[wire types={q}]\initqb&\end{quantikz}, a unitary block, and a round of measurements, possibly immediately followed by discards (represented with \begin{quantikz}[wire types={q}]&\ground{}\end{quantikz}).
We refer to the resulting pattern as the \ium circuits. 
 Note that in our figures, the alignment of the wires in the drawing is not meaningful (they might, for instance, be shuffled inside the circuit).
 A formal definition of our implementation of the deferred measurement transformation is provided in Appendix~\ref{se:lift}.

\myparagraph{Path-sums}
An alternative to matrix representation has recently been proposed: \emph{path-sums}~\cite{amy2018towards,amy2023complete,vilmart2023rewriting,Vilmart2020SOP,deng2024case,chareton2021automated}.  This symbolic representation gives a more compact representation for purely quantum circuits, and has been at the core of an equivalence checking tool for unitary circuits: Feynman~\cite{amy2018towards,amy2023complete}. The name refers to \emph{Feynman's paths}, where the quantum evolution of a system is represented as a weighted sum of possible ``paths''. In the path-sums formalism, these weights are parameterised by the input basis states.

The path-sums formalism supports symbolic sequential (\cite[Definition~2.6]{amy2018towards}) and parallel composition of operators. 
Nevertheless, path-sums are over-expressive: two circuits corresponding to the same linear operation might have distinct path-sum representations. 
To reconcile such equivalence representations, path-sums come with an equational theory, made of (in addition to Boolean and dyadic ring theories) a set of rewriting rules simplifying path-sum while preserving $\mathcal{V}$ equivalence.

\section{Partial Equivalences of Quantum Circuits}
\label{sec:overview}\label{se:partial-equivalences-of-circuits}

The core of this paper focuses on the equivalence checking of quantum circuits. This section is dedicated to presenting our proposition's context and main structuring elements. 
For the sake of readability, in the following we assume the particular case where all input wires are quantum (the general case, with possibly classical inputs, is formally treated in Appendix~\ref{se:technical-details}). 
The prevalent approach to equivalence checking within the field predominantly addresses circuits with no discards.
This paper elaborates on our methodological approach to adapt and extend 
equivalence checking to encompass circuits with discards.

\myparagraph{Shapes of Hybrid Circuits}
We claim the presence or the absence of discards to be the main difficulty in the equivalence checking of hybrid circuits.
It straightforwardly generates a three classes typology of hybrid circuit equivalence instances: 
\begin{itemize}
\item
  \cone concerns circuits 
  without discards. The primary applications of \cone involve dynamising unitary circuits: basically, the transformation inverse of the deferred measurement, introducing measurement instructions in order to turn quantumly controlled commands into classically controlled ones. A key usage is robustification, with examples including the Quantum Fourier Transform \cite{bäumer2024quantumfouriertransformusing} and Quantum Phase Estimation \cite{corQPEDyn2021}. The state-of-the-art methods for hybrid circuits currently address confines to this case~\cite{burgholzerHandlingNonUnitariesQuantum2022,xin2022DynamicEquivCheckTDD}.
\item
  \ctwo, where both circuits feature
  discard.  \ctwo is the most generic equivalence task over hybrid circuits 
   teleportation, one-way measurement, error correction, optimisations with ancillas, etc.
   In fact, most of the non-reversible quantum processes require the use of ancillas and discards. 
\item
    \conetwo is the mixed case,  where one circuit is with discards, and the second is without.  With \conetwo, we can verify the equivalence between unitary and hybrid circuits resulting from transformations such as one-way measurement or teleportation. 
\end{itemize}

\myparagraph{Taking advantage of the Deferred Measurement Transformation}
To open the field to the equivalence checking of classes \conetwo and \ctwo, we propose a unified approach by capitalising on the \emph{deferred measurement principle} discussed in Section~\ref{sec:background}. 

Given a circuit $\Cir$, the deferred measurement transformation isolates the initialisation, unitary, and measurement components, obtaining a \ium circuit. 
A central result of this paper consists in characterising the relationship between the equivalence of two circuits $\Cir_1$ and $\Cir_2$ possibly with measures and discards, and the equivalence of their deferred measurement versions \(\iumo\) and \(\iumt\).
Indeed, it can be shown that 
if (i) $\mathsf{I_1}$ (resp. $\mathsf{M_1}$) and $\mathsf{I_2}$ (resp. $\mathsf{M_2}$) are equal and (ii) the unitary blocks $\mathsf{U}_1$ and $\mathsf{U}_2$ behaves equivalently over non-discarded qubits and
after initialisation (unitary circuit partial equivalence), one can
directly infer the hybrid equivalence between $\iumo$ and $\iumt$, and thus
address the original problem.

Hence, while our method pre-processes hybrid circuits along the deferred measurement transformation, it does not properly apply the principle. Instead, it builds a logical over-approximation of the equivalence relation that takes advantage of the deferred measurement transformation (the \ium circuit normal form) while eventually ignoring the final measurement.

\section{Implementation}
\label{sec:implem}
We have implemented our method in a prototype named \sqbricks, made of about 3,500 lines of OCaml code.  The code will be made open source\footnote{\url{https://github.com/Qbricks/qbricks.github.io/tree/main/Artifacts/SQbricks}}.  The implementation comprises two parts: (1) the \sqbricksverif component implements a path-sum calculus to perform unitary verification and partial equivalence checking; (2) the \sqbrickslift component focuses on our generic lifting method, based on the deferred measurement transformation.

\myaltparagraph{\sqbricksverif} 
(\sqv) is a unitary circuit verification tool based on path-sums, with a core rewriting system. Circuits are built from a  pseudo-universal set of gates, comprising \(\had\), \(\xx\), \(\uu\) (Z rotation up to global phase) and combined through sequence and quantum control.
Furthermore, the user can explicitly indicate the address of qubits that are discarded.

\myaltparagraph{\sqbrickslift}
(\sql) implements deferred measurement to extend unitary equivalence verification tools to hybrid circuits.
This approach maintains consistency between hybrid circuit classes (\cone, \conetwo, and \ctwo) while ensuring broad interoperability with existing verification tools through the \openqasm\ 2 standard.

\section{Experimental evaluation}\label{se:eval}\label{sec:eval}
To assess our circuit equivalence checking technique, we 
 conducted a series of benchmarks, all led on a laptop using Linux Mint 21.2, equipped with an Intel Core i9-9880H CPU (2.3GHz x 8), 31 GiB of RAM, and a 500 GB SSD.

\subsection{Experimental setup}\label{se:xp-setup}
\myparagraph{Considered tools}
We evaluate \sqbrickslift with the state-of-the-art unitary circuit verification tools \autoq (Tree Automata), \feynman~\cite{amy2018towards,feynmanGit} (Path-Sums), \pyzx~\cite{kissinger2020Pyzx,pyzxDoc} (ZX) and \qcec~\cite{burgholzerDD2020,burgholzerSim2020,qcecDoc} in unitary mode (ZX, Decision Diagram, Simulation, labelled \emph{\lift}, L in the following tables) in addition to \sqbricksverif (Path-Sums).
We compare our overall performance for hybrid equivalence checking against \qcec~\cite{burgholzerHandlingNonUnitariesQuantum2022} (in full hybrid mode, labelled \emph{\stand}, S)%
and \veriqc~\cite{xin2022DynamicEquivCheckTDD} (figures from Table 1 in~\cite{xin2022DynamicEquivCheckTDD}, since we were not able to use the tool directly (Appendix, Section~\ref{se:context-veriqc}). 

\myparagraph{Circuit collection}  
We selected two libraries of quantum circuits in \openqasm 2~\cite{cross2017openquantumassemblylanguage} commonly used in the literature---\veriqbench~\cite{chen2022veriqbenchbenchmarkmultipletypes} and the \feynman library~\cite{amy2018towards}, plus a hybrid implementation of the \shor algorithm~\cite{Shor_1997} (\hcirc) inspired by the version provided in \qasmbench~\cite{li2022qasmbenchlowlevelqasmbenchmark}.  For each library, we only retained the circuits that \sqbricks could parse (some seem to have issues or need refinement, see Appendix~\ref{sec:details-collec}).  Observe that multiple samples represent the same algorithms with varying parameter settings, typically the size.  This results in a compilation of 420 unitary circuits and 204 hybrid circuits.

\myparagraph{Circuit equivalence challenges}\label{pa:circ:eq:challenge}
We consider the following circuit transformations.
(1) $\mathbf{\qiskittr}$: the \texttt{generate\_preset\_path\_manager} of IBM \qiskit~\cite{javadiabhari2024quantumcomputingqiskit} with an optimisation level of 3 (the highest level) to maximise circuit transformations.
(2) $\mathbf{\owm}$: One-Way Measurement~\cite{danosMeasurementCalculus2007}, a circuit transformation aiming at enhancing the practical application of quantum computers by minimising the time a qubit remains coherent.
(3) $\mathbf{{\tele}}$: Teleportation, described in Figure~\ref{fig:telep-intro}.  
In the end, our equivalence challenge categories are: 
\begin{itemize}
    \item 
      \cone, with two classes of challenges:  169 pairs of unitary-hybrid circuits provided by \veriqbench and implementing the same algorithms (\udyn), either as a unitary or as a hybrid circuit; and  88 pairs consisting of a hybrid circuit and its transformed version using $\qiskittr$ (\hcqis).\label{item:cone}
    \item
      \conetwo, with three classes of challenges:  347 pairs made of a unitary circuit together with its \owm\ version (\owmu);  406 pairs made of a unitary circuit and its \tele\ version (\teleu);  88 pairs corresponding to a hybrid circuit transformed by both $\qiskittr$ and \owm\ (\owmhc).\label{item:cone-ctwo}
    \item \ctwo: 347 pairs (\owmtele) obtained from unitary circuits transformed with \owm\ and \tele.\label{item:ctwo}
\end{itemize}

\subsection{Experimental observations}\label{sec:expe_obs}
We address the three following research questions.
\begin{itemize}
\item RQ1: Can we lift unitary verification for checking \cone equivalence?
\item RQ2: How does it compare to prior works on \cone equivalence?
\item RQ3: Can we lift unitary verification for checking \ctwo and \conetwo equivalence?
\end{itemize}
We also performed a sanity check to evaluate the correctness of the considered tools (Section \ref{sec:sanity}), and report some additional findings about unexpected behaviours in $\qiskit$  we found along our experiments (Section \ref{sec:findings}).

\myaltparagraph{RQ1: Can we lift unitary  verification  for checking \cone?}
\label{su:se:rq1:class1}

For \cone challenges, we could draw performance comparisons between \sqv and our selected set of unitary verification tools.
To do so, we performed a pre-processing deferred measurement transformation over \cone circuits.
Results are summarized in  Table~\ref{tab:results}: 
path-sum based methods (\feynman and \sqv) achieved perfect success rates.  
In contrast, \pyzx demonstrated limitations, failing to verify equivalence between two \cswap\ variants (Appendix, Example~\ref{ex:pyzx-inconclusive-result}) and showing scalability issues.  For instance, \pyzx took 534s to verify a \hcqis \qpe instance of size 35, whereas \feynman verified an instance of size 42 in just 3s.  \autoq successfully verified all circuit families for small instances but faced scalability challenges, such as being limited to \hcqis to \qpe of size 6.

\textit{Conclusion:}
Our approach successfully lifts unitary verification tools to handle hybrid \cone cases, with path-sums showing superior performance compared to ZX-calculus and automata-based methods.
\begin{table}[tb]
    \centering
        \caption{Evaluation results of the lifting application of deferred
      measurement.
      {S: Standalone, L: Lifted},
      \dc: Dynamic Circuit, subclass of Hybrid Circuit
      without Discard, O: \owm,\quad T: \tele,\quad Q: \qiskittr, TO:
      Time Out (10 min) Wrong: equivalence check returns not
      equivalent for  equivalent circuits, \NA: Not
      Applicable, \NS: Not Working.}
    \label{tab:results}
    \resizebox{\textwidth}{!}{    
    \begin{tabular}{lc|c|c|c|c|c|c|c|c}
        \midrule
        &&\multicolumn{7}{c|}{Success} & \\
        \multicolumn{2}{c|}{ } & \multicolumn{2}{c|}{\cone} & \multicolumn{3}{c|}{\conetwo} & \ctwo & \\ 
        Tool& Lift& \udyn & \hcqis & \owmu & \teleu & \owmhc & \owmtele & Total & TO \\ 
        \midrule
        \multirow{2}{*}{QCEC} & \multirow{2}{*}{S} & 0 & \multirow{2}{*}{87} 
        & \multirow{2}{*}{\NA} & \multirow{2}{*}{\NA} & \multirow{2}{*}{\NA} & \multirow{2}{*}{\NA} &  87 & \multirow{2}{*}{143} \\ 
        &  & Wrong: 27 &  &  &  &  &  & Wrong: 27 \\ 
        VeriQC (Cf. Table~\ref{tab:veriqc-bench}) & S & \NS & \NS & \NA & \NA & \NA & \NA & \NA & \NA \\
        \midrule
        AutoQ-2.0 & L &10& 15 & \NA &\NA  &\NA  &\NA  & 25 & 23 \\

        Feyn-24 & L &169& 88 & \NA &\NA  &\NA  &\NA  &  257 & 0 \\

        PyZX & L &73& 66 & \NA &\NA  &\NA  &\NA  & 139 &112  \\

        QCEC & L & 151 & 69 & \NA & \NA & \NA & \NA & 220 & 19  \\

        \midrule

        {\sqv} & L & 169 & {88} & {180} & {343} & {50} & {137} & {967} & {426} \\

        \midrule
        \multicolumn{2}{c|}{\#challenges} & 169 & 88 & 347 & 406 & 88 & 347&
        \multicolumn{1}{c}{1445} &\\ 
        \bottomrule 
    \end{tabular}
    }
\end{table}

\myaltparagraph{RQ2: How does it compare to  prior works on \cone?}\label{su:se:rq2:class1}
We compare \qcec~\cite{burgholzerHandlingNonUnitariesQuantum2022} and \veriqc~\cite{xin2022DynamicEquivCheckTDD} for \cone\ hybrid equivalence, evaluating \qcec\ directly and \veriqc\ via published results.
\qcec\ (standalone), lifted \feynman, and lifted \sqv\ show similar performance on 88 \hcqis\ challenges, but \qcec\ fails all 169 \udyn\ challenges, and incorrectly reports 27 non-equivalence proofs, indicating correctness issues (see Appendix, Example~\ref{ex:qcec-wrong-result} for a minimal example).
\veriqc \cite{xin2022DynamicEquivCheckTDD} performs hybrid equivalence checking on \cone.
\footnote{The paper claims results over \ctwo \cite[Definition 2]{xin2022DynamicEquivCheckTDD}. However,~\cite[Section 4.5]{xin2022DynamicEquivCheckTDD} explains how to check dynamic circuit equivalence with TDDs, but is limited to circuits without discard (our \cone).}
Unfortunately, despite contacting \veriqc~\cite{xin2022DynamicEquivCheckTDD}  authors, we couldn't use this tool for our experiments.  We compared our methods against its published~\cite{xin2022DynamicEquivCheckTDD} experimental performance data.
 Results are shown in Table~\ref{tab:veriqc-bench}.
With our lifting approach, all tools except \autoq verify the entire \veriqc benchmark faster than \veriqc itself.  

\medskip
\textit{Conclusion:}
Our lifting method is highly effective on \cone challenges.  While our method matches \qcec's performance on its best subcategory, 
it outperforms both \qcec and \veriqc in all other cases.
\begin{table}[t]
    \centering 
        \caption{
    Comparing against \veriqc from published results on their available benchmark (selection of the most significant results, time in seconds, \checkmark: Success)
    }
    \label{tab:veriqc-bench}
    \resizebox{.7\textwidth}{!}{    
    \begin{tabular}{lc|ccccc}
    \toprule
    \multicolumn{2}{c|}{}&\multicolumn{5}{c}{\textbf{\sqbrickslift} + }\\
    \midrule
    \textbf{Task} 
    & \textbf{\veriqc\cite{xin2022DynamicEquivCheckTDD}} 
    \hspace{1mm} & \hspace{1mm}\textbf{\autoq} 
    & \textbf{Feyn-24} 
    & \textbf{\pyzx} 
    & \textbf{\qcec} 
    & {\textbf{\sqv}}  \\
    \midrule
    \qft\_11  & 0.86 \checkmark & Err & 0,01 \checkmark& 0,39 \checkmark& 0,01 \checkmark& 0,01 \checkmark \\
    
    \qft\_16  & 23.38 \checkmark&Err& 0,02 \checkmark& 3,64 \checkmark& 0,02 \checkmark& 0,03 \checkmark \\
    pe\_9    & 1.62 \checkmark&Err& 0,01 \checkmark& 0,29 \checkmark & 0,01 \checkmark& 0,01 \checkmark \\
    phaseflip & 0.18 \checkmark& 2,50 \checkmark& 0,00 \checkmark& 0,01 \checkmark& 0,00 \checkmark& 0,00 \checkmark \\
    teleportation & 0.01 \checkmark& 0,00 \checkmark& 0,00 \checkmark & 0,00 \checkmark& 0,00 \checkmark& 0,00  \checkmark\\
    state\_inj\_T & 0.01 \checkmark& 0,00 \checkmark& 0,00 \checkmark& 0,00 \checkmark& 0,00 \checkmark& 0,00 \checkmark\\
    \bottomrule
    \end{tabular}
}
\end{table}

\myaltparagraph{RQ3: Can we lift  unitary  verification for checking \ctwo and \conetwo?}\label{su:se:rq3:class2}

This problem considers challenges from \ctwo and \conetwo, out of the scope of prior work.
Therefore, 
no comparative analysis was possible here.
Our method effectively addresses a substantial portion of the \conetwo\ and \ctwo\ hybrid equivalence challenges, handling 39.5\%
of \ctwo\ and 
67.7\% of \conetwo. 

\medskip

\textit{Conclusion:} 
Our approach can indeed address hybrid equivalence checking out of the scope of the current state-of-the-art tools, with a reasonable success rate, establishing an acceptable first solution for these problems.

\subsection{Additional Findings} \label{sec:findings}
\myparagraph{Sanity check} \label{sec:sanity}
We also performed a sanity check consisting of 73 equivalence tasks from \veriqbench\ and \qasmbench, with deliberately modified quantum circuits (mutants).  These modifications ensure non-equivalence by design.  
All versions of \feynman\ and \sqbricks\ successfully passed the sanity check with no false positives.  Other tools exhibited two types of failures, primarily related to rotation gates:
(1) \pyzx\ and \qcec\ failed to handle very small angles ($\leq \pi/2^{26}$ and $\leq \pi/2^{27}$ respectively).
(2) \autoq\ failed to distinguish between controlled and uncontrolled $Z$ axis rotation gates (\crz{k} and \zz{k} gates).

\myparagraph{Unexpected behaviors with $\qiskittr$} 
During our experiments, we uncovered two bugs in the \qiskit\ compiler:
(1) \textbf{Angle approximation:} \qiskit\ version 1.1.0 approximated small angles to 0, leading to incorrect circuit simplifications in e.g., \qft, \qpe, \shor algorithm, etc, when quantum registers are over 42 qbits\footnote{This simplification imposed the induced restrictions over our experiments}.  This issue was fixed in version 1.4.0.
(2) \textbf{Introduction of Floats:} In version 1.4.0, transformations introduced floating-point values (e.g., converting $3 \pi / 32$ to $0.2945243112740431$) instead of rational numbers, causing equivalence loss in some circuits. We reported this to the \qiskit\ team.
These findings highlight the practical utility of our method in identifying and mitigating approximation-related issues in quantum circuit transformations.

\myparagraph{Acknowledgments}
This work has been partially funded by the French National Research Agency
(ANR): projects TaQC
ANR-22-CE47-0012 and within the framework of ``Plan France 2030'',
under the research projects EPIQ ANR-22-PETQ-0007, OQULUS
ANR-23-PETQ-0013, HQI-Acquisition ANR-22-PNCQ-0001 and HQI-R\&D
ANR-22-PNCQ-0002.

\appendix 

\section{Experimental evaluation details}\label{se:Experimental evaluation details}

This appendix provides details about our experimental evaluation, including specifics about the collection of circuits we used (Appendix~\ref{sec:details-collec}) 
and information about the tools we compared ourselves to (Appendix~\ref{sec:tool-limitations}).

\subsection{Details on Circuit Collection}\label{sec:details-collec}
Our circuit collection comprises 
two categories:
\begin{itemize} 

    \item \textbf{{\ucirc}s}: 
    420  circuits, including 43/44 circuits from the \feynman library
    \footnote{Except \lstinline{cycle_17_3.qasm} which has an implementation issue}.
    and 377/782 circuits from the \veriqbench library \lstinline{combinational} subset, and without the \lstinline{sequential} and \lstinline{variational} subsets due to parsing issues.
    Parsing issues arise if a circuit is ill-formed or contains elements not accounted for in our syntax, such as macros.  
    
    \item \textbf{{\hcirc}s}: 
    204 circuits, including 198/205  circuits from the \veriqbench library (\qpe and \qft, bit flip and phase flip correction, state injection and teleportation) and an implementation of \shor~\cite{Shor_1997} over 5 qubits inspired from the \qasmbench~\cite{li2022qasmbenchlowlevelqasmbenchmark} library. 
\end{itemize}

\subsection{Tool Limitations and Failure Cases}\label{sec:tool-limitations}

This appendix provides information on the benchmark for \veriqc (Table~\ref{tab:veriqc-bench}), and
presents concrete examples demonstrating specific limitations of state-of-the-art 
quantum circuit verification tools, including cases where tools return incorrect 
or inconclusive results despite the functional equivalence of the circuits being compared.

\noindent\paragraph{\veriqc.}\label{se:context-veriqc}
After encountering difficulties in reproducing their experiments with \cone, we began email correspondence with the authors. However, we faced ongoing challenges in replicating their results. As a result, we utilised their findings: \cite[Table 1]{xin2022DynamicEquivCheckTDD}.

\begin{example}[\pyzx minimal inconclusive result]\label{ex:pyzx-inconclusive-result}
    The following pair of circuits implements the controlled-swap operation (\cswap) in two different ways. 
    This operation swaps two qubits conditionally on the state of a third control qubit. 
    For basis states $|x_0,x_1,x_2\rangle$, the operation should:
    \begin{itemize}
        \item Leave $x_0$ unchanged (control qubit)
        \item Swap $x_1$ and $x_2$ when $x_0 = 1$
        \item Leave $x_1$ and $x_2$ unchanged when $x_0 = 0$
    \end{itemize}
    While these implementations are functionally equivalent (both implement controlled-swap), \pyzx~\cite{kissinger2020Pyzx} {(version 0.9.0)} fails to verify this equivalence, demonstrating a limitation in handling certain control structures.

    \scalebox{.9}{\begin{minipage}{0.52\textwidth}
        \centering
        \captionof{figure}{\cswap\ first implementation}
        \begin{quantikz}[row sep=0.1cm, column sep=0.3cm, wire types={q,q,q}]
            \lstick{$\ket{x_0}$} & \ctrl{2} & \ctrl{1} & \ctrl{2} & \rstick{$\ket{x_0}$} \\
            \lstick{$\ket{x_1}$} & \ctrl{1} & \gate{\xx} & \ctrl{1} & \rstick{$\ket{x_0x_2 \oplus (1 \oplus x_0) x_1}$} \\
            \lstick{$\ket{x_2}$} & \gate{\xx} & \ctrl{-1} & \gate{\xx} & \rstick{$\ket{x_0x_1 \oplus (1 \oplus x_0) x_2}$}
        \end{quantikz}
    \end{minipage}}
    \scalebox{.9}{\begin{minipage}{0.52\textwidth}
        \centering
        \captionof{figure}{\cswap\ alternative implementation}
        \begin{quantikz}[row sep=0.1cm, column sep=0.3cm, wire types={q,q,q}]
            \lstick{$\ket{x_0}$} & & \ctrl{1} & & \rstick{$\ket{x_0}$} \\
            \lstick{$\ket{x_1}$} & \ctrl{1} & \gate{\xx} & \ctrl{1} & \rstick{$\ket{x_0x_2 \oplus (1 \oplus x_0) x_1}$} \\
            \lstick{$\ket{x_2}$} & \gate{\xx} & \ctrl{-1} & \gate{\xx} & \rstick{$\ket{x_0x_1 \oplus (1 \oplus x_0) x_2}$}
        \end{quantikz}
    \end{minipage}}
\end{example}

\begin{example}[\qcec minimal incorrect result]\label{ex:qcec-wrong-result}
    The circuits below illustrate a minimal case where \qcec~\cite{burgholzerHandlingNonUnitariesQuantum2022} {(version 2.8.1)}, 
    when using its deferred measurement option, 
    incorrectly determines that the circuits $\Cir_1$ and $\Cir_2$ are not equivalent. 

    \scalebox{.9}{\begin{minipage}{0.52\textwidth}
        \centering
        \captionof{figure}{$\Cir_1$}
           \begin{quantikz}[row sep=0.1cm, column sep=0.3cm, wire types={q,q}]
            & \meter{} \wire[d][1]{c} \\
        & \gate{\zz{1}} & \gate{\had} & \meter{}
    \end{quantikz}
    \end{minipage}}
    \scalebox{.9}{\begin{minipage}{0.52\textwidth}
        \centering
        \captionof{figure}{$\Cir_2$}
        \begin{quantikz}[row sep=0.1cm, column sep=0.3cm, wire types={q,q}]
            & \ctrl{1} & & \meter{} \\
        & \gate{\zz{1}} & \gate{\had} & \meter{}
    \end{quantikz}
    \end{minipage}}
\end{example}

\section{Technical Details}\label{se:technical-details}
 
\subsection{Lifting unitary verification tools for hybrid circuits}\label{se:lift}

This section provides the main formal ingredients for our deferred measurement-based lifting of unitary circuits' equivalence to the hybrid case.

\begin{definition}[Hybrid circuits]\label{def:hyb-cir}
    A hybrid circuit is a sequence of instructions generated by the following syntax
    \[
    \begin{array}{rcl}
    \G     &:=& \Gp(k)\mid \zz{k}\mid \X\mid \Had  \\
    \Ga     &:=& \apply (\G, [\qbit],[\qbit])  \\
    \Ins     &:=&  \Ga \mid\ifthen{ \cbit} \Ga \mid \meas (\qbit,\cbit)\mid  \init(\qbit) \mid\wnot (\cbit)\\
    \Cir     &:=& [\Ins]
    \\
    \end{array}
    \]
    where $[t]$ denotes a standard list construction of type $t$, built either  as the nill list $[l]:= \nil$ or as a $t$ type object $o$ appended to  another $t$ list  $[l]:= \headf{o}{[l']}$. 
    By abuse of notation, we use the same $\headf{l}{l'}$ for $l,l'$ being either a list or a type $t$ object (assimilated to a list of one single element). 
    
     $\apply (\G, [\qbit_1],[\qbit_2])$ intuitively commands the parallel application of gate $\G$ on qubits in $[\qbit_2]$, controlled by the conjunction of qubits in $[\qbit_1]$; $\meas (\qbit,\cbit)$ commands the measure of qubit \qbit\ and the storage of the obtained data in classical bit \cbit
    A  circuit is \emph{well-formed} if it respects the following syntactical constraints: that (i) a classical wire should receive at most one measurement result in a given circuit and (ii) a quantum wire is not further addressed when having been measured.
    \label{well-formedness}
\end{definition}

The deferred measurement principle is introduced in~\cite{nielsenChuang2002} as a \emph{rather obvious} property of circuits: that measurement can always be moved from an intermediate stage of the computation to the end of the circuit, while classically controlled instructions are replaced by quantum conditionals. 
Surprisingly, to the best of our knowledge, the first formal and generic proof of the principle was established as late as 2022~\cite{Devitt_2013}, and a computer-assisted proof was even recently given in~\cite{ying2024lawsquantumprogramming}.

 For a circuit $\Cir$, we write \linit(\Cir) the sum of, for each initialisation command $\init$, the number of non-initialisation commands that precede it in $\Cir$. Symmetrically, \emeas(C) denotes the sum of, for each measure command \meas in $\Cir$, the number of non-measure commands succeeding it in $\Cir$.
We also informally introduce the function $m_{\Cir}$. 
It maps a classical wire $\cbit$ to a quantum wire  $\qbit$ in instructions. In our deferred measurement circuit transformation, 
as the measurement  $\meas(\qbit,\cbit)$ of a qubit is postponed to the end of the execution, the intermediary classical control commands over $\cbit$ should be turned into quantum control commands over $\qbit$. This is achieved by applying the instruction transformation $m_{\Cir}(\qbit,\cbit)$ to the instructions $\meas(\qbit,\cbit)$ is permuted with. In addition, 
 it turns a classical bit-flip instruction  $\wnot(\cbit)$ into its quantum counterpart $\apply (\X, \nil,\qbit)$. Formally, for any instruction \Ins, we have that:
\[\begin{array}{rcl}
\Ins[ m_{\Cir}(\cbit,\qbit)] &=& \textbf{if } \Ins := \ifthen{ \cbit} \apply (\G, [co],[ta])~
\textbf{then } 
 \apply (\G, [co\cup\{\qbit\}],[ta])  \\
&&\textbf{else  if }\Ins := \wnot(\cbit)\\
&&\qquad \textbf{then }\apply (\X, \nil,\qbit)\\
 && \qquad \textbf{else }\Ins
\end{array}
 \]
 
\begin{definition}[Deferred measurement circuit transformation]\label{def:dm}
     The deferred measurement transformation is built as a double inductive rewriting pass, where initialisation (resp. measurement) commands systematically commute with non-init (non-measure) commands whenever they occur in non-initial (resp. non-final) position.
    {\footnotesize\[
    \begin{array}{rlllll}
        \texttt{pI(\Cir)} :=& \textit{if }& \linit(C) = 0 &\textit{then}& \Cir& \\
        &\textit{ else if}  & \Cir:= \headf{\init(\qbit)}{\Cir'} &\textit{then}&\headf{\init(\qbit)}{\texttt{pI}(\Cir')}&  \\
        & \textit{ else if} & \Cir:= \headf{\Ins}{\headf{\init(\qbit)}{\Cir'}} &\textit{then}& \headf{\init(\qbit)}{\texttt{pI}(\headf{\Ins}{\Cir'}})&  \\
        &\textit{ else if}  & \Cir:= \headf{\Ins}{\Cir'}  &\textit{then}&\texttt{pI}(\headf{\Ins}{\texttt{pI}(\Cir')}) \\ 
        &&&&& \\\texttt{lM}(\Cir):= & \textit{if }& \texttt{lM}(\Cir) = 0 &\textit{then}& \Cir&\\
        & \textit{ else if} & \Cir:= \tailf{\Cir'}{\meas(\qbit,\cbit)} &\textit{then}&\tailf{\texttt{lM}(\Cir')}{\meas(\qbit)}& \\
        & \textit{ else if} & \Cir:= \tailf{\tailf{\Cir'}{\meas(\qbit,\cbit)}}{\Ins} &\textit{then}& \tailf{\texttt{lM}(\tailf{\Cir'}{\Ins[ m_{\Cir}(\qbit,\qbit)]})}{\meas(\qbit,\cbit)}&  \\
        & \textit{ else if} & \Cir:= \tailf{\Cir'}{\Ins}  &\textit{then}&\texttt{lM}(\tailf{\texttt{lM}(\Cir')}{\Ins}) \\
        &&&&& \\\dm(\Cir)&:=&\texttt{lM}(\texttt{pI}(\Cir))      
    \end{array}
    \]}
\end{definition}

This transformation provides a circuit  equivalent to $\Cir$
and sequentially structured as three successive blocks $\headf{\headf{[\init]_{\Cir}}{[\texttt{U}]_{\Cir}}}{[\meas]_{\Cir}}$ of (i) initialisation  (ii) unitary gate application, and (iii) measure commands. Formally, we have the following theorem:

\begin{theorem}[Deferred measurement]\label{thm:dm}
Let $\Cir$ be a hybrid circuit, then:
(i) $\dm(\Cir) \hequiv \Cir$, (ii) $\linit(\dm(\Cir)) = 0$,
(iii) $\emeas(\dm(\Cir)) = 0$
\end{theorem}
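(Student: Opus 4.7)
The plan is to treat the two passes $\texttt{pI}$ and $\texttt{lM}$ separately and then compose the results. For each pass I would establish three facts: termination of its recursive definition, semantic preservation, and the target normal-form property ($\linit=0$ for $\texttt{pI}$, $\emeas=0$ for $\texttt{lM}$). The identity $\dm(\Cir) := \texttt{lM}(\texttt{pI}(\Cir))$ then yields conclusions (ii) and (iii), provided I additionally check that $\texttt{lM}$ preserves the $\linit=0$ invariant produced by $\texttt{pI}$, and conclusion (i) follows by transitivity of $\hequiv$.

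First I would handle $\texttt{pI}$. Termination follows from a lexicographic measure on $(\linit(\Cir),|\Cir|)$: the swap case strictly decreases $\linit$, and the remaining recursive cases act on strictly shorter suffixes. Semantic preservation reduces to a single local lemma: $\init(\qbit)$ commutes with any instruction $\Ins$ that does not address $\qbit$. Well-formedness (Definition~\ref{well-formedness}), together with the convention that an initialisation is the first event on its wire, ensures that whenever $\Ins$ immediately precedes $\init(\qbit)$ in a well-formed circuit, $\Ins$ cannot act on $\qbit$, so the local commutation is sound. A structural induction on the recursion then gives $\hsemantics{\texttt{pI}(\Cir)} = \hsemantics{\Cir}$ and $\linit(\texttt{pI}(\Cir)) = 0$.

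Next I would treat $\texttt{lM}$. Termination is symmetric, using the measure $(\emeas(\Cir),|\Cir|)$. The key local lemma here is the adjacent-swap form of the deferred measurement principle: for any instruction $\Ins$ that succeeds $\meas(\qbit,\cbit)$,
\[
\headf{\meas(\qbit,\cbit)}{\Ins} \;\hequiv\; \headf{\Ins[m_{\Cir}(\cbit,\qbit)]}{\meas(\qbit,\cbit)}.
\]
Here $\Ins[m_{\Cir}(\cbit,\qbit)]$ replaces a classical control on $\cbit$ by a quantum control on $\qbit$, turns $\wnot(\cbit)$ into $\apply(\X,\nil,\qbit)$, or leaves $\Ins$ unchanged when it mentions neither. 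Well-formedness rules out any further use of $\qbit$ after its measurement, so the substitution is always well-defined. Inducting over the recursion then yields $\hsemantics{\texttt{lM}(\Cir)} = \hsemantics{\Cir}$ and $\emeas(\texttt{lM}(\Cir)) = 0$. I would additionally observe that $\texttt{lM}$ only moves measurements backward past non-initialisation commands, so it preserves the invariant $\linit(\cdot)=0$ established by $\texttt{pI}$, which gives (ii) for $\dm(\Cir)$.

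The hard part will be the local equivalence underlying $\texttt{lM}$: justifying that replacing a classical control by its quantum counterpart, once the associated measurement has been moved past it, is semantically invariant. This amounts to unfolding the hybrid semantics of $\meas$ and of classically controlled application, and checking that both sides of the adjacent swap induce the same joint distribution on the output quantum-classical registers. The cases of $\wnot(\cbit)$ and of instructions independent of $\cbit$ are immediate, whereas the classically-controlled gate case is where the deferred measurement principle genuinely lives, and where I would expect to spend most of the technical effort.
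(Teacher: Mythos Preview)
Your proposal is correct and follows exactly the decomposition the paper uses: show that $\texttt{pI}$ preserves $\hequiv$ and achieves $\linit=0$, then that $\texttt{lM}$ preserves both $\hequiv$ and the $\linit=0$ invariant while additionally achieving $\emeas=0$, and conclude by composition. Your write-up is considerably more detailed than the paper's sketch (which is a one-line structural-induction hint), but the strategy, the split into local commutation lemmas, and the observation that $\texttt{lM}$ must preserve the invariant established by $\texttt{pI}$ all match.
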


\begin{proof}[Sketch]
    By structural induction over $\Cir$. Transformation \texttt{pI} preserves the semantical equivalence and ensures condition 2., transformation \texttt{lM} preserves both and ensures condition 3 in addition.
\end{proof}

\begin{example}\label{ex:tele}
As an illustration of Definition~\ref{def:dm}, Figure~\ref{fig:telep-dm} draws it application to the teleportation case discussed in Example~\ref{runexam}: measurement instructions are delayed to the end of the execution, and intermediary classical instruction -- originally controlling over measurement results -- are turned into quantum controlled instruction -- controlling over the corresponding \emph{not yet measured} quantum wires.
\end{example}

\end{document}